\DeclareMathOperator{\poly}{poly}
\newcommand{\comment}[1]{}
\newcommand{\ltsqn}[1]{\|#1\|^2}
\newcommand{\ltn}[1]{\|#1\|}
\newcommand{\dotp}[2]{\langle #1, #2 \rangle}
\newcommand{\svp}{{\tt SVP}}
\newcommand{\cvp}{{\tt CVP}}
\newcommand{\bdd}{{\tt BDD}}
\newcommand{\LLL}{{\tt LLL}}
\newcommand{\usvp}{{\tt uSVP}}
\newcommand{\Sp}{{\texttt{span}}}
\newcommand{\G}{{\texttt{Gap}}}
\newcommand{\B}{{\bf B}}
\newcommand{\bi}{{\bf b}}
\newcommand{\ti}{{\bf t}}
\newcommand{\zi}{{\bf z}}
\newcommand{\vi}{{\bf v}}
\newcommand{\ui}{{\bf u}}
\newcommand{\xl}{{\bf x}}
\newcommand{\R}{{\bf\mathbb{R}}}
\newcommand{\Z}{{\bf\mathbb{Z}}}
\newcommand{\di}{{\bf\texttt{d}}}
\newcommand{\La}{{\bf\mathbb{L}}}
\newcommand{\Or}{{\bf\mathfrak{O}}}
\newcommand{\Q}{{\bf\mathbb{Q}}}
\newcommand{\fb}{{\tt findbasis}}
\title{Approximating the Closest Vector Problem Using an Approximate Shortest Vector Oracle}
\author{Chandan Dubey\thanks{Chandan Dubey is partially supported by the Swiss National Science Foundation (SNF), project no. 200021-132508} \and Thomas Holenstein}
\institute{Institute for Theoretical Computer Science\\ETH Zurich\\\email{chandan.dubey@inf.ethz.ch}\\\email{thomas.holenstein@inf.ethz.ch}}
\begin{document}

\maketitle

\begin{abstract}
We give a polynomial time Turing reduction from the $\gamma^2\sqrt{n}$-approximate closest vector problem on a lattice of dimension $n$ to a $\gamma$-approximate oracle for the shortest vector problem. This is an improvement over a reduction by Kannan, which achieved $\gamma^2n^{\frac{3}{2}}$.  
\end{abstract}

\section{Introduction}

A {\em lattice} is the set of all integer combinations of~$n$ linearly independent vectors $\bi_1, \bi_2, \dots, \bi_n$ in $\R^m$. These vectors are also referred to as a {\em basis} of the lattice. The {\em successive minima} $\lambda_i(\La)$ (where $i = 1, \dots, n$) for the lattice $\La$ are among the most fundamental parameters associated to a lattice. The value $\lambda_i(\La)$ is defined as the smallest $r$ such that a sphere of radius $r$ centered around the origin contains at least $i$ linearly independent lattice vectors. Lattices have been investigated by computer scientists for a few decades after the discovery of the LLL algorithm \cite{LLL82}. More recently, Ajtai \cite{Ajt96} showed that lattice problems have a very desirable property for cryptography: they exhibit a worst-case to average-case reduction.

We now describe some of the most fundamental and widely studied lattice problems. Given a lattice $\La$, the $\gamma$-approximate shortest vector problem ($\gamma$-$\svp$ for short) is the problem of finding a non-zero lattice vector of length at most $\gamma\lambda_1(\La)$. Let the minimum distance of a point $\ti \in \R^m$ from the lattice $\La$ be denoted by $\di(\ti,\La)$. Given a lattice $\La$ and a point $\ti \in \R^m$, the $\gamma$-approximate closest vector problem or $\gamma$-$\cvp$ for short is the problem of finding a $\vi \in \La$ such that $\ltn{\vi-\ti} \leq \gamma\di(\ti,\La)$. 

Besides the search version just described, $\cvp$ and $\svp$ also have a gap version. The problem $\G\cvp_\gamma(\B,\ti)$  asks the distance of $\ti$ from the lattice $\La(\B)$ within a factor of $\gamma$, and $\G\svp_\gamma(\B)$ asks for $\lambda_1(\B)$ within a factor of $\gamma$. 
This paper deals with the search version described above.

The problems $\cvp$ and $\svp$ are quite well studied. The ${\tt Gap}$ versions of the problems are arguably easier than their ${\tt search}$ counterparts. We know that $\cvp$ and $\svp$ can be solved exactly in deterministic $2^{O(n)}$ time \cite{MV10,AKS01}. In polynomial time they can be approximated within a factor of $2^{n(\log\log n)/\log n}$ using LLL \cite{LLL82} and subsequent improvements by Schnorr \cite{Sch87} and Micciancio et. al. \cite{MV10} (for details, see the book by Micciancio and Goldwasser \cite{GM02}). On the other hand, it is known that there exists $c>0$, such that no polynomial time algorithm can approximate $\G\cvp$ and $\G\svp$ within a factor of $n^{c/\log\log n}$, unless {\bf P} $=$ {\bf NP} or another unlikely scenario is true \cite{DKRS03,HR07}. The security of hardness of cryptosystems following Ajtai's seminal work \cite{Ajt96} is based on the worst-case hardness of $\tilde{O}(n^2)$-${\tt Gap}\svp$ \cite{Reg04,Pei09,LM09}. In the hardness area, $\cvp$ is much more understood than $\svp$. For example, as opposed to $\cvp$, until now all known {\bf NP}-hardness proofs for $\svp$ \cite{Ajt98,Mic01,Kho05,HR07} are randomized. A way to prove deterministic hardness of $\svp$ is to prove better reductions from $\cvp$ to $\svp$. This paper aims to study and improve the known relations between these two problems.

A very related result is from Kannan \cite{Kannan87}, who gave a way to solve $\sqrt{n}$-$\cvp$ using an exact $\svp$ oracle. A generalization of his reduction was used to solve $\cvp$ within a factor of $(1+\epsilon)$ by reducing it to sampling short vectors in the lattice \cite{AKS02}. The improvement from $\sqrt{n}$ to $(1+\epsilon)$ is achieved mainly because the reduction uses $2^{O(n)}$ time instead of polynomial. It is also known that a $\gamma$-$\cvp$ oracle can be used to solve $\gamma$-$\svp$ \cite{GMSS99}. 

In a survey \cite{Kan87}, Kannan gave a different reduction from $\gamma^2n^{\frac{3}{2}}$-$\cvp$ to $\gamma$-$\svp$. A few words of comparison between our methods and the method used by Kannan \cite{Kan87}. Kannan uses the dual lattice (denoted by $\B^{*} = (\B^T)^{-1}$, where $\B^T$ is the transpose of the matrix $\B$) and the transference bound $\lambda_1(\B)\lambda_1(\B^{*}) \leq n$ to find a candidate close vector. Due to the fact that he applies the $\svp$ oracle on both $\La$ as well as $\La^{*}$, he loses an additional factor of $n$. Our method does not use the dual lattice.

{\bf Our contribution:} We improve the result by Kannan \cite{Kan87}, which shows that $\gamma^2n^{3/2}$-$\cvp$ can be solved using an oracle to solve $\gamma$-$\svp$, and solve $\gamma^2\sqrt{n}$-$\cvp$ using the same oracle.

For this, we essentially combine the earlier result by Kannan \cite{Kannan87} 
with a reduction by Lyubashevsky and Micciancio \cite{LM09}, as we explain
now in some detail.

Our starting point is the earlier reduction by Kannan, which solves $\sqrt{n}$-$\cvp$ using an exact $\svp$-oracle.  In order to explain our ideas, we first shortly describe his reduction.  Given a $\cvp$-instance $\B \in \Q^{m\times n}, \ti \in \R^m$, Kannan uses the $\svp$-oracle to find $\lambda_1(\B)$.  He then creates the new basis $\tilde{\B}=\Bigg[\begin{array}{cc}\B & \ti\\0 & \alpha\end{array}\Bigg]$, where he picks $\alpha$ carefully somewhat smaller than $\lambda_1(\B)$.  Now, if $\di(\ti,\B)$ is significantly smaller than $\lambda_1(\B)$ (say, $\lambda_1(\B)/3$), then the shortest vector in $\tilde{\B}$ is $\Big[\begin{array}{c}\ti^\dagger-\ti\\-\alpha\end{array}\Big]$, where $\ti^\dagger$ is the lattice vector closest to $\ti$ (i.e., the vector we are trying to find).  On the other hand if $\di(\ti,\B)$ is larger than $\lambda_1(\B)/3$, then Kannan projects the instance in the direction orthogonal to the shortest vector of $\B$.  This reduces the dimension by $1$, and an approximation in the resulting instance can be used to get an approximation in the original instance, because the projected approximation can be ``lifted'' to find some original lattice point which is not too far from $\ti$.

We show that in case we only have an approximation oracle for $\svp$, 
we can argue as follows.
First, if $\di(\ti,\B) \leq \frac{\lambda_1(\B)}{2\gamma}$, then we have
an instance of a so called ``Bounded Distance Decoding'' problem.
By a result of Lyubashevsky and Micciancio
\cite{LM09}, this can be solved using the the
oracle we assume.
In case $\di(\ti,\B) > \frac{\lambda_1(\B)}{2\gamma}$ we
can recurse in the same way as Kannan does.  
The approximation factor $\gamma^2 \sqrt{n}$ comes from this case: 
lifting a projection after the recursion returns, incurs
an error of roughly the half the length of the vector $\vi$ which
was used to project.
Since this $\vi$ can have length almost $\gamma \lambda_1(\B)$, the
length of $\vi$ can be almost a factor $\gamma^2$ larger than
$\di(\ti,\B)$.
The squares of these errors then add up as in Kannan's reduction, 
which gives a total approximation
factor of $\gamma^2\sqrt{n}$.

We remark that even though we do not know which of the two cases
apply, we can simply run both, and then use the better result.

Finally, we would like to mention that to the best of our knowledge
there is no published proof that in Kannan's algorithm \cite{Kannan87}
the projected bases have a representation which is polynomial in the
input size. We show that this is indeed the case.  For this,
it is essentially enough to use
a lemma from~\cite{GM02} which states that the vectors in a 
Gram-Schmidt orthogonalization have this property.

\section{Preliminaries}

\subsection{Notation}
A lattice basis is a set of linearly independent vectors $\bi_1, \dots, \bi_n \in \R^m$.  It is sometimes convenient to think of the basis as an $n \times m$ matrix $\B$, whose $n$ columns are the vectors $\bi_1, \dots, \bi_n$. The lattice generated by the basis $\B$ will be written as $\La(\B)$ and is defined as $\La(\B) = \{\B x | x \in \Z^n\}$. The {\em span} of a basis $\B$, denoted as $\Sp(\B)$, is defined as $\{\B y | y \in \R^n\}$.  We will assume that the lattice is over rationals, i.e., $\bi_1, \dots, \bi_n \in \Q^m$, and the 
entries are represented by the pair of numerator and denominator.  An {\em elementary} vector $v \in \La(\B)$ is a vector which cannot be written as a non-trivial multiple of another lattice vector.  

A {\em shortest vector} of a lattice is a non-zero vector in the lattice whose $\ell_2$ norm is minimal.   The length of the shortest vector is $\lambda_1(\B)$, where $\lambda_1$ is as defined in the introduction.  For a vector $\ti \in \R^m$, let $\di(\ti,\La(\B))$ denote the distance of $\ti$ to the closest lattice point in $\B$.  We use $\ti^{\dagger}$ to denote a (fixed) closest vector to $\ti$ in $\La(\B)$. 

For two vectors $\ui$ and $\vi$ in $\R^m$, $\vi|_\ui$ denotes the component of $\vi$ in the direction of $\ui$ i.e., $\vi|_{\ui} = \frac{\dotp{\vi}{\ui}}{\dotp{\ui}{\ui}}\ui$. Also, the component of $\vi$ in the direction orthogonal to $\ui$ is denoted by $\vi_{\perp \ui}$ i.e., the vector $\vi-\vi|_{\ui}$.  

Consider a lattice $\La(\B)$ and a vector $\vi \in \La(\B)$ in the lattice. Then the projected lattice of $\La(\B)$ perpendicular to $\vi$ is $\La(\B_{\perp \vi}) := \{\ui_{\perp \vi}| \ui \in \La(\B)\}$.  A basis of $\La(\B_{\perp \bi_1})$ is given by the vectors $\{{\bi_2}_{\perp \bi_{1}}, \dots, {\bi_n}_{\perp \bi_{1}}\}$. 

For an integer $k \in \Z^{+}$ we use $[k]$ to denote the set $\{1, \ldots, k\}$. 

\subsection{Lattice Problems}

In this paper we are concerned with the following approximation problems, which are parametrized by some $\gamma > 1$.

\begin{description}
\item[$\gamma$-$\svp$:] Given a lattice basis $\B$, find a non-zero vector $\vi \in \La(\B)$ such that $\ltn{\vi} \leq \gamma\lambda_1(\B)$.


\item[$\gamma$-$\cvp$:] Given a lattice basis $\B$, and a vector $\ti \in \R^m$ find a vector $\vi \in \La(\B)$ such that $\ltn{\vi-\ti} \leq \gamma \di(\ti,\B)$.
\end{description}

\noindent
We also use the following promise problems, which are parameterized by some $\gamma > 0$. 

\begin{description}
\item[$\gamma$-$\bdd$:] Given a lattice basis $\B$, and a vector $\ti \in \R^m$ with the promise that $\di(\ti,\La(\B)) \leq \gamma\lambda_1(\B)$, find a vector $\vi \in \La(\B)$ such that $\ltn{\vi-\ti} = \di(\ti,\B)$.

\item[$\gamma$-$\usvp$:] Given a lattice basis $\B$ with the promise that $\lambda_2(\B) \geq \gamma\lambda_1(\B)$, find a non-zero vector $\vi \in \La(\B)$ such that $\ltn{\vi} = \lambda_1(\B)$
(this makes sense only for $\gamma \geq 1$).
\end{description}

We assume that we have given a $\gamma$-$\svp$ oracle, denoted by $\Or$. 
When given a set of linearly independent vectors
 $\B = \{\bi_1, \bi_2, \dots, \bi_n\} \in \Q^{m\times n}$, $\Or(\B)$
 returns an elementary vector $\vi \in \La(\B)$
 which satisfies $0 < \ltn{\vi} \leq \gamma \lambda_1(\La(\B))$
(if $\vi$ is not elementary then we can find out the multiple and recover 
the corresponding elementary vector).

\section{Some basic tools}

Given a basis $\B$ and an elementary vector $\vi \in \La(\B)$, we can in polynomial time find a new basis of $\La(\B)$ of the form $\{\vi,\bi_2^{'}, \dots, \bi_n^{'}\}$. To do this we use the following lemma from Micciancio \cite{Mic08} (page~7, Lemma~1), which we specialized somewhat for our needs.

\begin{lemma}\label{newbasis}
There is a polynomial time algorithm $\fb(\vi,\B)$, which, 
on input an elementary vector $\vi$ of $\La(\B)$ and a lattice basis 
$\B \in \Q^{m\times n}$ outputs~$\tilde{\B} = (\tilde{\bi}_2,\ldots,
\tilde{\bi}_{n})$ 
such that $\La(\vi,\tilde{\bi}_2, \dots, \tilde{\bi}_{n}) = \La(\B)$.
\end{lemma}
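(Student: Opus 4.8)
The plan is to reduce the problem to finding a single vector $\tilde{\bi}_2$ and then iterating. Given the elementary vector $\vi\in\La(\B)$, write $\vi = \B\zi$ for a (unique) integer vector $\zi\in\Z^n$. The key observation is that $\vi$ being \emph{elementary} means $\zi$ is a primitive integer vector, i.e.\ $\gcd(z_1,\ldots,z_n)=1$. I would invoke the fact that a primitive vector can be extended to a basis of $\Z^n$: there is a unimodular matrix $\M\in\Z^{n\times n}$ (computable in polynomial time via the Hermite Normal Form or iterated extended-Euclid steps) whose first column is $\zi$. Then the columns of $\B\M$ form a basis of $\La(\B)$ whose first column is $\B\zi=\vi$; setting $(\tilde{\bi}_2,\ldots,\tilde{\bi}_n)$ to be the remaining columns of $\B\M$ gives the claim, and this is exactly the specialization of Micciancio's Lemma~1 that we cited.

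Concretely, the main steps are: (i) solve the linear system $\B\zi = \vi$ over $\Q$ to obtain $\zi$, and verify $\zi\in\Z^n$ with $\gcd$ one (which is guaranteed since $\vi$ is elementary in $\La(\B)$); (ii) construct the unimodular completion $\M$ of $\zi$ — one clean way is to repeatedly apply $2\times 2$ elementary integer column operations that replace a pair $(z_i,z_j)$ by $(\gcd(z_i,z_j),0)$, accumulating the transformations, until only the first coordinate is nonzero (and equal to $1$); (iii) output the last $n-1$ columns of $\B\M$. Correctness is immediate: $\M$ unimodular implies $\La(\B\M)=\La(\B)$, and by construction the first column of $\B\M$ is $\vi$, so $\La(\vi,\tilde{\bi}_2,\ldots,\tilde{\bi}_n)=\La(\B)$.

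The only thing that needs genuine care — and the step I expect to be the main obstacle — is the \emph{polynomial running time}, in particular controlling the bit-size of the intermediate integers in the gcd/unimodular-completion process. A naive accumulation of column operations can in principle blow up the entries of $\M$; the standard fix is to reduce via Hermite Normal Form, for which polynomial bounds on all intermediate entries are known, or to explicitly bound the cofactors appearing in the Euclidean sweep. Since the lemma is quoted verbatim from Micciancio~\cite{Mic08}, where exactly this bit-complexity bookkeeping is carried out, I would simply cite that analysis rather than reproduce it; the content here is only to note that the specialization to an \emph{elementary} $\vi$ (so that no nontrivial multiple needs to be stripped first) is precisely the case that makes the first output vector equal to $\vi$ itself.
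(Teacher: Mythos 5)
Your proposal is correct and matches what the paper does: Lemma~\ref{newbasis} is simply a specialization of Micciancio's Lemma~1 from \cite{Mic08}, and the paper does not reprove it but cites that result directly. Your sketch (lift $\vi$ to the primitive coefficient vector $\zi$ with $\B\zi=\vi$, complete $\zi$ to a unimodular $\M$ via HNF/Euclidean steps with standard bit-size control, and take the remaining columns of $\B\M$) is precisely the standard argument behind that cited lemma.
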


\begin{lemma}\label{plb}
Let $\La(\B)$ be a lattice and $\vi \in \La(\B)$ be a vector in the lattice. If $\La(\B_{\perp \vi})$ is the projected lattice of $\La(\B)$ perpendicular to $\vi$ then $\lambda_i(\B_{\perp \vi}) \leq \lambda_{i+1}(\B)$, $i\in[n-1]$.
\end{lemma}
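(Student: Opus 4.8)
The plan is to use the standard argument: project a family of short linearly independent vectors witnessing $\lambda_{i+1}(\B)$ into the projected lattice and bound how much linear independence can be lost. First I would invoke the definition of the successive minima to fix $i+1$ linearly independent lattice vectors $\ui_1,\dots,\ui_{i+1}\in\La(\B)$ with $\ltn{\ui_j}\le\lambda_{i+1}(\B)$ for every $j\in[i+1]$. Let $\pi$ denote the orthogonal projection of $\Sp(\B)$ onto the hyperplane orthogonal to $\vi$, so that $\pi(\ui)=\ui_{\perp\vi}$ for all $\ui$. By the definition of $\La(\B_{\perp\vi})$ each $\pi(\ui_j)$ is a vector of $\La(\B_{\perp\vi})$, and since an orthogonal projection never increases Euclidean length we have $\ltn{\pi(\ui_j)}\le\ltn{\ui_j}\le\lambda_{i+1}(\B)$.

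The key step is to check that $\pi(\ui_1),\dots,\pi(\ui_{i+1})$ span a subspace of dimension at least $i$. This is pure linear algebra: $\pi$ restricted to $\Sp(\B)$ is linear with kernel equal to the one-dimensional line $\R\vi$, so passing to the image can drop the dimension of any subspace by at most one; applying this to the $(i+1)$-dimensional span of $\ui_1,\dots,\ui_{i+1}$ gives $\dim\Sp\!\big(\pi(\ui_1),\dots,\pi(\ui_{i+1})\big)\ge (i+1)-1 = i$. Hence, after relabelling, $\pi(\ui_1),\dots,\pi(\ui_i)$ are $i$ linearly independent vectors of $\La(\B_{\perp\vi})$, each of length at most $\lambda_{i+1}(\B)$. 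By the definition of $\lambda_i$ this yields $\lambda_i(\B_{\perp\vi})\le\lambda_{i+1}(\B)$, which is the claim.

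I do not expect a real obstacle here; the one point worth a sentence of care is that $\La(\B_{\perp\vi})$ is genuinely a rank-$(n-1)$ lattice, i.e.\ a discrete subgroup, so that $\lambda_i(\B_{\perp\vi})$ is defined for $i\in[n-1]$. This holds because the kernel $\R\vi$ of $\pi$ meets $\La(\B)$ in the rank-one sublattice $\La(\B)\cap\R\vi\supseteq\Z\vi$, so the image $\La(\B_{\perp\vi})$ of $\La(\B)$ under $\pi$ is discrete of rank $n-1$; I would either note this explicitly or treat it as already built into the definition given in the Preliminaries. Everything else reduces to the length-nonincreasing property of projections and the rank-nullity bound above.
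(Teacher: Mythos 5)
Your proof is correct and uses the same core idea as the paper: project the $i{+}1$ linearly independent vectors witnessing $\lambda_{i+1}(\B)$ and observe that projection shortens lengths while the one-dimensional kernel $\R\vi$ can reduce the dimension of their span by at most one. Your rank--nullity framing is actually a bit cleaner than the paper's case analysis on whether $(\vi_1)_{\perp\vi}=0$, which leaves the linear independence of the projected witnesses somewhat implicit.
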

\begin{proof}
Let $\vi_i$ be the vector of length $\lambda_i(\B)$ such that $\{\vi_1, \dots, \vi_n\}$ are linearly independent. A set of such vectors exists \cite{GM02}. If $(\vi_1)_{\perp \vi} = 0$ then $(\vi_{i>1})_{\perp \vi} \in \La(\B_{\perp \vi})$ and $0 < \ltn{(\vi_i)_{\perp \vi}} \leq \ltn{\vi_i}$, proving the lemma. If $(\vi_1)_{\perp \vi} \neq 0$ then $(\vi_1)_{\perp \vi} \in \La(\B_{\perp \vi})$ and $0 < \ltn{(\vi_1)_{\perp \vi}} \leq \ltn{\vi_1}$. We argue in a similar way with $(\vi_2)_{\perp \vi}$ to prove the lemma for $i > 1$.
\qed\end{proof}

We use the following reduction from due to Lyubashevsky and
Micciancio~\cite{LM09}.
\begin{theorem}\label{lm}
For any $\gamma \geq 1$, there is a polynomial time oracle
reduction from $\bdd_{\frac{1}{2\gamma}}$ to $\usvp_\gamma$.
\end{theorem}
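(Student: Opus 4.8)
The plan is to reduce $\bdd_{1/(2\gamma)}$ to $\usvp_\gamma$ by the standard embedding trick, using the fact that the $\usvp$ promise is exactly what makes the embedded lattice's shortest vector unique and equal to the error vector we seek. Let $(\B,\ti)$ be a $\bdd_{1/(2\gamma)}$ instance, so $\di(\ti,\La(\B)) \le \frac{1}{2\gamma}\lambda_1(\B)$; write $\ti^\dagger$ for the closest lattice vector and $\ei = \ti^\dagger - \ti$, so $\ltn{\ei} = \di(\ti,\B)$. Form the embedded basis
\begin{equation*}
\tilde{\B} = \left[\begin{array}{cc}\B & \ti\\ 0 & \beta\end{array}\right]
\end{equation*}
for a suitable scalar $\beta$ to be chosen below; the lattice $\La(\tilde\B)$ consists of vectors $\big[\begin{array}{c}\B\zi + k\ti\\ k\beta\end{array}\big]$ for $\zi \in \Z^n$, $k \in \Z$. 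Observe that the vector $\tilde\ei := \big[\begin{array}{c}\B\zi^\dagger - \ti\\ -\beta\end{array}\big]$ (taking $k = -1$, $\zi^\dagger$ with $\B\zi^\dagger = \ti^\dagger$) lies in $\La(\tilde\B)$ and has length $\sqrt{\ltn{\ei}^2 + \beta^2}$.

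First I would show that with the right choice of $\beta$, namely $\beta$ roughly of the order of $\di(\ti,\B)$ (or a bit larger, but still $\le \frac{1}{2\gamma}\lambda_1(\B)$ up to constants), the lattice $\La(\tilde\B)$ satisfies the $\usvp_\gamma$ promise with $\tilde\ei$ as its unique shortest vector. The case analysis is on whether $k = 0$. Any lattice vector with $k = 0$ is of the form $\big[\begin{array}{c}\B\zi\\ 0\end{array}\big]$ with $\zi \ne 0$, hence has length $\ge \lambda_1(\B)$. Any lattice vector with $|k|\ge 1$ has its first block equal to $k\ti + \B\zi = k(\B\zi' - \ti^\dagger) + \text{(lattice)} $... more cleanly: its first block is $-k\ei + \B(\zi + k\zi^\dagger)$, so by the triangle inequality its first block has length $\ge |k|\cdot\di(\ti,\B) - $ wait — rather $\ge \di(\ti,\B)\cdot 0$; the clean bound is that $\B(\zi+k\zi^\dagger) - k\ei$ is at distance $\ge \di(\ti,\B)$ from $0$ unless it is small, so I instead bound: the squared length is $\ltn{k\ti + \B\zi}^2 + k^2\beta^2$, and $k\ti + \B\zi$ is $k\ti$ plus a lattice point, so its length is $\ge \di(k\ti,\La(\B))$. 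For $|k|\ge 2$ one gets length at least... this requires $\di(k\ti,\La(\B))$ bounds, which is where I must be slightly careful. The honest approach: for $|k|\ge 1$, $\ltn{k\ti+\B\zi} \ge |k|\lambda_1(\B) - \ltn{(k-1)\text{stuff}}$ is not clean, so instead use $\di(k\ti,\La) \ge \lambda_1(\B) - \di(\ti,\La)\cdot|k|$ is also wrong. The correct standard estimate is: if $\ltn{k\ti + \B\zi}$ is small, say $< \lambda_1(\B) - |k|\di(\ti,\B)$, then combining with $k$ copies of $\ei$ contradicts minimality — concretely $\ltn{k\ti + \B\zi + k\ei} \ge$ ... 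This bookkeeping to show every vector with $k\ne \pm 1$ (and the $k=\pm1$ ones other than $\pm\tilde\ei$) has length $\ge \gamma\sqrt{\ltn{\ei}^2+\beta^2}$ is the technical heart.

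The cleanest route, which I would actually follow, is to invoke the known quantitative statement behind Theorem~\ref{lm}: scale $\beta$ so that $\beta = \ltn{\ei}$ is not assumed (we don't know $\ltn{\ei}$), but we do know the bound $\di(\ti,\B)\le \frac{1}{2\gamma}\lambda_1(\B)$, and $\lambda_1(\B)$ can be approximated via the $\usvp$/$\svp$ machinery — or rather, we try all $\poly$-many geometrically-spaced guesses for $\beta$ in the range $(0, \frac{1}{2\gamma}\lambda_1(\B)]$ and output the best candidate. For the guess $\beta$ with $\di(\ti,\B) \le \beta < 2\di(\ti,\B)$ (one such guess exists up to the granularity), one checks: vectors with $|k|\ge 2$ have length $\ge 2\beta - $ (projection error) which after the computation exceeds $\gamma\cdot\sqrt{\ltn{\ei}^2 + \beta^2} \le \gamma\sqrt{2}\,\beta$ precisely because $\di(\ti,\B)\le \frac{1}{2\gamma}\lambda_1(\B)$ forces $\lambda_1(\B)$ to dominate; vectors with $k=0$ have length $\ge \lambda_1(\B) \ge 2\gamma\di(\ti,\B) \ge \gamma\sqrt{2}\,\beta$ once $\beta$ is small enough relative to $\di(\ti,\B)$ — this is why the guessing range and the constant matter; and the two vectors $\pm\tilde\ei$ with $k=\pm1$ are genuinely the two shortest. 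Hence $\lambda_2(\La(\tilde\B)) \ge \gamma\lambda_1(\La(\tilde\B))$, the $\usvp_\gamma$ promise holds, and the oracle returns $\pm\tilde\ei$; reading off the first $n$ coordinates and negating gives $\ti - \ti^\dagger$, hence $\ti^\dagger$, an exact closest vector. Finally I would note the whole procedure is polynomial time: $\poly$ guesses for $\beta$, one oracle call each, and polynomial-size rational entries throughout since $\beta$ can be taken rational of bounded bit-length.

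I expect the main obstacle to be the quantitative gap argument — verifying that for the correctly-chosen $\beta$ every lattice vector of $\La(\tilde\B)$ other than $\pm\tilde\ei$ has length at least $\gamma$ times $\ltn{\tilde\ei}$. This splits into the $k=0$ case (easy, uses $\lambda_1(\B)\ge 2\gamma\di(\ti,\B)$) and the $|k|\ge 1$ case, where one must carefully relate $\di(k\ti,\La(\B))$ to $\lambda_1(\B)$ and $\di(\ti,\B)$ via the triangle inequality; getting the constants to line up is exactly what pins down the admissible range for the guess $\beta$, and this is the one place the $\frac{1}{2\gamma}$ in the $\bdd$ parameter is used in an essential way. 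Since this theorem is quoted from \cite{LM09}, in the paper itself it suffices to cite it; the proposal above is the proof one would reconstruct.
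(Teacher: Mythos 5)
Your overall strategy---the embedding $\tilde\B$ with a scaled extra coordinate and then arguing the $\usvp_\gamma$ promise holds---is the same as the paper's (and as \cite{LM09}), and the guess-$\beta$-geometrically idea is a legitimate way to avoid knowing $\di(\ti,\B)$ (the paper simply assumes $\di(\ti,\B)$ is known and cites \cite{LM09} for removing that assumption). However, you never actually establish the gap, and this is a genuine missing step rather than just unfinished bookkeeping. Your proposed case split on $k$ is also structurally the wrong shape for this estimate: for $|k|\ge 2$ you fall back to the last-coordinate bound $|k|\beta\ge 2\beta$, but with $\beta\approx\|\ei\|$ the target lower bound is $\gamma\sqrt{\|\ei\|^2+\beta^2}\approx\gamma\sqrt2\,\beta$, so that bound alone fails as soon as $\gamma>\sqrt2$. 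One really does need to combine the last coordinate with information about the first block for \emph{every} $k$, and the clean way to do this is not a $k$-by-$k$ analysis.

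The paper's argument is a single unified triangle inequality, parameterized by $a$ (the coefficient of the appended column). Writing any candidate short vector as $\vi_2 = \big[\begin{smallmatrix}\ui - a\ti\\ -a\alpha\end{smallmatrix}\big]$ with $\ui\in\La(\B)$, $a\in\Z$, one observes that $\ui - a\ti^\dagger$ is a nonzero lattice vector precisely because $\vi_2$ is not a multiple of the intended short vector. Then
\[
\|\ui - a\ti^\dagger\| \;\le\; \|\ui - a\ti\| + |a|\,\|\ti - \ti^\dagger\|
\;=\; \sqrt{\|\vi_2\|^2 - a^2\alpha^2} + |a|\alpha ,
\]
and the right-hand side, maximized over $a$, is at most $2\gamma\alpha\le\lambda_1(\B)$ under the assumption $\|\vi_2\|<\sqrt2\,\gamma\alpha$, giving a contradiction. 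This is exactly the step you flagged as ``the technical heart'' and then left open; it handles $k=0$, $|k|=1$, and $|k|\ge 2$ in one stroke and is where the $\frac{1}{2\gamma}$ in the $\bdd$ promise is used. Your sketch hints at the right ingredients (the observation that $k\ti+\B\zi+k\ei$ is a lattice point), but without pinning down the maximization over $a$ and the admissible range of $\beta$, the $\usvp_\gamma$ promise is not verified and the proof is incomplete.
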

For completeness, we sketch a proof of Theorem \ref{lm} in
Appendix~\ref{app:lyubashevskyMicciancio}.

\section{Reducing $\cvp$ to $\svp$}
We prove the following theorem:
\begin{theorem}\label{main}
Given a basis $\B \in \Q^{m\times n}$ and a vector $\ti \in \R^m$, the problem $\gamma^2\sqrt{n}$-$\cvp$ is Turing reducible to the problem $\gamma$-$\svp$ in time $\poly(n,\log \gamma,\max_i \log \ltn{\bi_i})$.
\end{theorem}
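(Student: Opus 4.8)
The plan is to give a recursive procedure \texttt{SolveCVP} that, on a basis $\B$ of rank $n$ and a target $\ti$, produces two candidate lattice vectors and returns whichever is closer to $\ti$; correctness follows by induction on $n$, and the factor $\gamma^{2}\sqrt n$ drops out of a Pythagorean computation exactly as in Kannan's $\sqrt n$-$\cvp$ reduction. In the base case $n=1$ the lattice is one-dimensional and a closest vector is found by rounding $\dotp{\ti}{\bi_1}/\dotp{\bi_1}{\bi_1}$ to the nearest integer. For $n>1$ I would first call the oracle to obtain an elementary $\vi=\Or(\B)$ with $\lambda_1(\B)\le\ltn{\vi}\le\gamma\lambda_1(\B)$. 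The first candidate $\yi_1$ is obtained by treating $(\B,\ti)$ as a $\bdd_{1/(2\gamma)}$ instance and running the reduction of Theorem~\ref{lm}; the second candidate $\yi_2$ is obtained by projecting orthogonally to $\vi$, recursing, and lifting. Since we know neither $\lambda_1(\B)$ nor $\di(\ti,\B)$ exactly, we cannot test which of the two regimes $\di(\ti,\B)\le\frac{\lambda_1(\B)}{2\gamma}$ or $\di(\ti,\B)>\frac{\lambda_1(\B)}{2\gamma}$ we are in, so we simply compute both candidates and output the better one. Note that there is a single recursive call per invocation, so the recursion is a path of length~$n$.

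For the first candidate: if $\di(\ti,\B)\le\frac{\lambda_1(\B)}{2\gamma}$, then $(\B,\ti)$ genuinely satisfies the $\bdd_{1/(2\gamma)}$ promise, so by Theorem~\ref{lm} we recover an exact closest vector, $\yi_1=\ti^{\dagger}$, giving approximation factor $1$ in this regime. To invoke Theorem~\ref{lm} I need a $\usvp_\gamma$ oracle, which I would simulate using the given $\gamma$-$\svp$ oracle: on a lattice $\M$ with $\lambda_2(\M)>\gamma\lambda_1(\M)$, the vector $\Or(\M)$ is elementary of length $<\lambda_2(\M)$, hence lies in the rank-one primitive sublattice generated by a shortest vector, and therefore is itself a shortest vector. (The reduction of \cite{LM09} calls its $\usvp$ oracle only on lattices built to have a comfortable gap, so the strict inequality is available; otherwise one runs with a slightly smaller parameter.)

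For the second candidate: let $\B'$ be a basis of $\La(\B_{\perp\vi})$, obtained by running $\fb(\vi,\B)$ of Lemma~\ref{newbasis} and projecting the resulting vectors orthogonally to $\vi$, and let $\ti'=\ti_{\perp\vi}$ (we may assume $\ti\in\Sp(\B)$, since the component of $\ti$ orthogonal to $\Sp(\B)$ adds the same term to every lattice distance). Recursion yields $\vi'\in\La(\B')$ with $\ltn{\vi'-\ti'}\le\gamma^{2}\sqrt{n-1}\,\di(\ti',\B')$, and $\di(\ti',\B')\le\di(\ti,\B)$ because $(\ti^{\dagger})_{\perp\vi}\in\La(\B')$. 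Reading off the coordinates in the $\fb$-basis we lift $\vi'$ to some $\ui\in\La(\B)$ with $\ui_{\perp\vi}=\vi'$, and set $\yi_2=\ui-k\vi$ with the integer $k$ chosen so that $\ltn{(\yi_2-\ti)|_{\vi}}\le\tfrac{1}{2}\ltn{\vi}$. Orthogonality of the two components then gives
\[
\ltsqn{\yi_2-\ti}=\ltsqn{\vi'-\ti'}+\ltsqn{(\yi_2-\ti)|_{\vi}}\le\gamma^{4}(n-1)\,\di(\ti,\B)^{2}+\tfrac{1}{4}\ltsqn{\vi}.
\]
In the regime $\di(\ti,\B)>\frac{\lambda_1(\B)}{2\gamma}$ we have $\ltn{\vi}\le\gamma\lambda_1(\B)<2\gamma^{2}\di(\ti,\B)$, so $\tfrac{1}{4}\ltsqn{\vi}<\gamma^{4}\di(\ti,\B)^{2}$ and the right-hand side is below $\gamma^{4}n\,\di(\ti,\B)^{2}$; hence $\yi_2$, and a fortiori the returned vector, is a $\gamma^{2}\sqrt n$-approximation. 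In the complementary regime $\yi_1$ is already exact, so in both cases the output is a $\gamma^{2}\sqrt n$-approximation and the induction closes.

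The step I expect to be the main obstacle is the running-time claim: one must show that the $n$-fold iteration of ``$\fb$ then project'' keeps the intermediate bases of size $\poly(n,\log\gamma,\max_i\log\ltn{\bi_i})$. The key observation is that the successive oracle outputs $\vi_1,\vi_2,\dots$ are mutually orthogonal (each lives in a lattice already projected orthogonally to the earlier ones), so if $\hat\vi_j\in\La(\B)$ denotes a shortest lift of $\vi_j$ then $\Sp(\hat\vi_1,\dots,\hat\vi_j)=\Sp(\vi_1,\dots,\vi_j)$ and $\vi_j$ is precisely the $j$-th Gram--Schmidt vector of $\hat\vi_1,\dots,\hat\vi_j$. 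By Lemma~\ref{plb} the lengths $\ltn{\vi_j}\le\gamma\lambda_j(\B)\le\gamma\max_i\ltn{\bi_i}$ are singly exponentially bounded, which (via a lower bound on the least singular value of $\B$, using that its shortest lift differs from $\vi_j$ by a vector in a sublattice of bounded covering radius) bounds the integer coordinates, hence the bit-length, of the $\hat\vi_j$; the Gram--Schmidt lemma of \cite{GM02} then bounds the bit-length of the $\vi_j$ and of the projected bases, and with it the cost of each oracle call, each $\fb$ call, and each lift. Everything else---the base case, the $\bdd$ reduction, the rounding of $k$---is manifestly polynomial, so this bit-length bound is the crux of the argument.
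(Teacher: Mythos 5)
Your proposal follows essentially the same approach as the paper: the same two-candidate recursion (a $\bdd_{1/(2\gamma)}$ call via Theorem~\ref{lm} and a project-recurse-lift step), the same case split on $\di(\ti,\B)$ versus $\lambda_1(\B)/(2\gamma)$, and the same Pythagorean bound yielding $\gamma^4 n\,\di(\ti,\B)^2$. Your observation on why a $\gamma$-$\svp$ oracle serves as a $\usvp_\gamma$ oracle is a sound addendum the paper leaves implicit.

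On the runtime analysis, you correctly identify the crux and the key structural fact: each $\vi_j$ is the $j$-th Gram--Schmidt vector of a sequence of lifts $\hat{\vi}_1,\dots,\hat{\vi}_j$ in $\La(\B)$, with $\|\vi_j\|\le\gamma\lambda_j(\B)$ by Lemma~\ref{plb}. However, you route through ``a lower bound on the least singular value of $\B$'' to bound integer coordinates of $\hat{\vi}_j$, which is an unnecessary detour. The paper's Lemma~\ref{lem:vAsProjections} argues more directly: choosing the lift so that $\hat{\vi}_i = \vi_i + \sum_{j<i}\delta_j\vi_j$ with $|\delta_j|\le\tfrac12$ gives $\|\hat{\vi}_i\|^2 \le \sum_{j\le i}\|\vi_j\|^2 \le n\gamma^2\lambda_n^2(\B)$, and since $\hat{\vi}_i$ is an integer vector of this bounded norm, its bit-length is immediately $\poly(M)$; Lemma~\ref{lem:basisSize} then controls the $\vi_j$. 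Also, you don't address how the projected bases $\B'_{\perp\vi}$ themselves stay small across levels: the paper inserts an $\LLL$ call after $\fb$ precisely so the intermediate bases remain polynomially representable, and your Gram--Schmidt argument bounds the $\vi_j$ but not the recursion's working bases. That piece needs to be filled in (LLL-reducing each level's basis is the easy fix), but it does not change the overall structure, which matches the paper's.
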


In this section we give the algorithm to prove our theorem, and show that once it terminates, it satisfies the requirements of the theorem.  We will show that the algorithm runs in polynomial time in the next section.

The reduction takes as an input a basis $\B \in \Q^{m \times n}$ and a vector $\ti \in \R^m$. 
Recall that the oracle $\Or$ takes as  input a basis over $\Q$ and outputs an elementary vector which is a $\gamma$-approximation to the shortest vector. 
The reduction is given in Algorithm~\ref{cvpsolver}.

\begin{algorithm}
\caption{$\cvp(\B,\ti)$\hfill 
\emph{(input: $\B \in \Q^{m\times n}$, $\ti \in \Q^m$)}}
\label{cvpsolver}
\begin{algorithmic}[1]
\IF{$n=1$} 
\STATE Let $\bi_1$ be the only column of $\B$.
\STATE {\bf return} $a\bi_1$ with $a\in \Z$ such that $\|a \bi_1 - \ti\|$ is minimal.
\ELSE 
\STATE $\zi_1 \gets \frac{1}{2\gamma}$-$\bdd(\B,\ti)$ \COMMENT{Solve this with calls to $\Or$ as in Theorem~\ref{lm}}
\STATE $\vi \gets \Or(\B)$
\STATE $\{\bi_{2},\dots, \bi_{n}\} \gets \LLL(\fb(\vi,\B))$
\STATE $\forall i \in \{2,\ldots,n\}: 
(\bi'_i)_{\perp \vi} \gets \bi_i-{\bi_i}|_{\vi}$
\STATE $\B^{'}_{\perp \vi} \gets \{(\bi'_2)_{\perp \vi}, \ldots, (\bi'_n)_{\perp \vi}\}$
\STATE $\ti'_{\perp} \gets \ti - \ti|_{\vi}$
\STATE $\zi'_2 \gets \cvp(\B^{'}_{\perp \vi},\ti'_{\perp \vi})$
\STATE Find $(a_2,\dots, a_n) \in \Z^{n-1}$ such that $\zi'_2=\sum_{i=2}^{n}a_i({\bi'_i})_{\perp \vi}$ 
\STATE Find $a_1 \in \Z$ such that $\zi_2=a_1 \vi + \sum_{i=2}^{n} a_i \bi_i$ is closest to $\ti$
\STATE {\bf return} the element of $\{\zi_1, \zi_2\}$ which is closest to $\ti$.
\ENDIF
\end{algorithmic}
\end{algorithm}
In line~6, we can simulate an oracle for $\frac{1}{2\gamma}$-\bdd{} due to 
Theorem~\ref{lm}, given $\Or$.
In line~7 we run the LLL algorithm on the basis returned by $\fb$; this 
is an easy way to ensure that the representation of the basis 
does not grow too large 
(cf.~the proof of Lemma~\ref{lem:polybits2}).
The optimization problem in line~13 is of course easy to
solve: for example, we can find $a_1' \in \mathbb{R}$ which minimizes 
the expression and then round $a_1'$ to the nearest integer.
\begin{theorem}\label{algoworks}
The approximate $\cvp$-solver {\tt (Algorithm \ref{cvpsolver})} outputs a vector $\zi \in \La(\B)$ such that $\ltn{\zi-\ti} \leq \gamma^2\sqrt{n}\di(\ti,\B)$.
\end{theorem}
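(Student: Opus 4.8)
The plan is to prove correctness by induction on the dimension $n$, analyzing the two candidates $\zi_1$ and $\zi_2$ separately and showing that in every case at least one of them is a $\gamma^2\sqrt{n}$-approximation. The base case $n=1$ is immediate: line~3 returns an exact closest vector, so $\ltn{\zi-\ti}=\di(\ti,\B)$. For the inductive step, fix the closest vector $\ti^\dagger$ and set $\mu := \di(\ti,\B) = \ltn{\ti^\dagger-\ti}$. I would split on whether $\mu \leq \frac{\lambda_1(\B)}{2\gamma}$.

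In the first case, $\mu \leq \frac{\lambda_1(\B)}{2\gamma}$, the pair $(\B,\ti)$ is a valid instance of $\bdd_{1/(2\gamma)}$, so by Theorem~\ref{lm} the call in line~5 returns $\zi_1$ with $\ltn{\zi_1-\ti} = \di(\ti,\B)$; since line~15 returns whichever of $\zi_1,\zi_2$ is closer to $\ti$, the output is an exact closest vector and we are done. The interesting case is $\mu > \frac{\lambda_1(\B)}{2\gamma}$. Here I focus on $\zi_2$. Let $\vi = \Or(\B)$, so $\ltn{\vi} \leq \gamma\lambda_1(\B)$, and hence $\ltn{\vi} \leq \gamma\lambda_1(\B) < 2\gamma^2 \mu$, i.e.\ $\ltn{\vi} < 2\gamma^2\di(\ti,\B)$ — this is the quantitative heart of the argument. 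Now consider the projected instance: $\ti'_{\perp\vi}$ and the lattice $\La(\B'_{\perp\vi})$. The projection of $\ti^\dagger$ lies in $\La(\B'_{\perp\vi})$ and has distance at most $\mu$ from $\ti'_{\perp\vi}$, so $\di(\ti'_{\perp\vi}, \B'_{\perp\vi}) \leq \mu$. By the induction hypothesis applied to this $(n-1)$-dimensional instance, the recursive call returns $\zi'_2$ with $\ltn{\zi'_2 - \ti'_{\perp\vi}} \leq \gamma^2\sqrt{n-1}\,\di(\ti'_{\perp\vi},\B'_{\perp\vi}) \leq \gamma^2\sqrt{n-1}\,\mu$.

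The remaining step is the ``lifting'' estimate. Writing $\zi'_2 = \sum_{i=2}^n a_i (\bi'_i)_{\perp\vi}$, the vector $\zi_2 = a_1\vi + \sum_{i=2}^n a_i\bi_i$ chosen in line~13 has its component orthogonal to $\vi$ equal to $\zi'_2$, and $a_1$ is chosen (optimally, then rounded) to minimize the component along $\vi$. Since the $\vi$-component ranges over an arithmetic progression of step $\ltn{\vi}$ (as $a_1$ ranges over $\Z$), rounding costs at most $\ltn{\vi}/2$ in that direction. By orthogonality (Pythagoras),
\[
\ltn{\zi_2 - \ti}^2 \leq \ltn{\zi'_2 - \ti'_{\perp\vi}}^2 + \left(\tfrac{1}{2}\ltn{\vi}\right)^2 \leq \gamma^4(n-1)\mu^2 + \gamma^4\mu^2 = \gamma^4 n\,\mu^2,
\]
using $\ltn{\vi} < 2\gamma^2\mu$ so that $(\ltn{\vi}/2)^2 < \gamma^4\mu^2$. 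Hence $\ltn{\zi_2-\ti} \leq \gamma^2\sqrt{n}\,\di(\ti,\B)$, and since line~15 returns the better of $\zi_1,\zi_2$, the output satisfies the claimed bound. I expect the main obstacle to be bookkeeping the lifting step carefully: one must verify that the orthogonal component of $\zi_2$ really is exactly $\zi'_2$ (so that the two error contributions are genuinely orthogonal and Pythagoras applies), and that the rounding of $a_1'$ in line~13 does indeed incur an error of at most $\ltn{\vi}/2$ along $\vi$ regardless of the values of $a_2,\dots,a_n$ — this is where the factor $\gamma^2$ (rather than $\gamma$) enters, via $\ltn{\vi}\leq\gamma\lambda_1(\B)$ combined with $\lambda_1(\B) < 2\gamma\mu$.
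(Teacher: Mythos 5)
Your proposal is correct and follows essentially the same route as the paper's proof: induction on $n$, the same case split on whether $\di(\ti,\B)$ is below or above $\lambda_1(\B)/(2\gamma)$, the BDD oracle handling the first case, and in the second case the same orthogonal decomposition of $\zi_2 - \ti$ into a $\vi$-component (bounded by $\ltn{\vi}/2$ via rounding $a_1$) and a perpendicular component (bounded by the induction hypothesis), combined via Pythagoras with $\ltn{\vi} \le \gamma\lambda_1(\B) < 2\gamma^2\di(\ti,\B)$. The paper's proof spells out the lifting bookkeeping you flagged (writing $\ti = \ti'_{\perp\vi} + \phi\vi$ and $\sum a_i\bi_i = \zi_2' + \eta\vi$ so that the two error terms are exactly orthogonal and $|a_1+\eta-\phi| \le \tfrac12$), but the argument is the same.
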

\begin{proof}
We prove the theorem by induction on $n$. For the base case (i.e., $n=1$) we find the closest vector to $\ti$ in a single vector basis. This can be done exactly by finding the correct multiple of the only basis vector that is closest to $\ti$.

When $n>1$, we see that each run of the algorithm finds two candidates
$\zi_1$ and $\zi_2$.  
We show that the shorter of the two is an approximation to
the closest vector to $\ti$ in $\La(\B)$ for which
\begin{equation}\label{gcv}
\ltn{\zi-\ti} \leq \sqrt{n}\gamma^2\di(\ti,\B)
\end{equation}

We divide the proof in two cases, depending on whether $\di(\ti,\B) <
\frac{\lambda_1(\B)}{2\gamma}$.  
It is sufficient to show that in each case one of $\zi_1$ or $\zi_2$
satisfies Equation (\ref{gcv}).
\begin{enumerate}[1.]
\item If $\di(\ti,\B) < \frac{\lambda_1(\B)}{2\gamma}$, 
the promise of $\frac{1}{2\gamma}$-$\bdd$ is satisfied.
Thus, $\zi_1$ satisfies $\|\zi_1 - \ti\| \leq
\di(\ti,\B)$.
\item If $\di(\ti,\B) \geq \frac{\lambda_1(\B)}{2\gamma}$ we proceed
as in Kannan's proof to show that $\zi_2$ satisfies Equation (\ref{gcv}). 

By the induction hypothesis, $\zi_2'$ satisfies
\begin{align*}
\ltsqn{\zi_2'-\ti'_{\perp \vi}} \leq (n-1)\gamma^4\di^2(\ti'_{\perp \vi},\B'_{\perp \vi})
\end{align*}

At this point, note first that $\ti = \ti'_{\perp \vi} + \phi \vi$
for some $\phi\in \mathbb{R}$.
Since also $\sum_{i=2}^n a_i \bi_i = \zi_2' + \eta\vi$ for some
$\eta \in \mathbb{R}$, we can write
\begin{align*}
\| \zi_2 - \ti\|^2 &=
\ltsqn{(a_1 \vi + \zi_2' + \eta\vi) - (\ti'_{\perp \vi} + \phi\vi)}\\
&=
\|(a_1+ \eta - \phi)\vi\|^2 + \ltsqn{\zi'_2-\ti_{\perp \vi}}
\end{align*}
Since $a_1$ is chosen such that this expression is minimal we have
$|a_1 + \eta - \phi| \leq \frac12$, and so
\begin{align*}
\ltsqn{\zi_2-\ti} &\leq  
\ltsqn{\zi'_2-\ti_{\perp \vi}}+
\frac{\ltsqn{\vi}}{4}
\leq \ltsqn{\zi_2'-\ti_{\perp \vi}} + \frac{\gamma^2\lambda_1^2(\B)}{4}
\\&
\leq (n-1)\gamma^4\di^2(\ti_{\perp \vi},\La(\B_{\perp \vi})) + \frac{\gamma^2 4\gamma^2\di^2(\ti,\B)}{4}
\\&
\leq \gamma^4n\di^2(\ti,\B)\;.
\end{align*}
The second last inequality follows from $\lambda_1^2(\B) \leq 4\gamma^2\di^2(\ti,\B)$, which holds in this second case.  To see the last inequality, note that $\La(\B_{\perp \vi})$ is a projection of $\La(\B)$ and $\ti_{\perp \vi}$ is a projection of $\ti$ in the direction orthogonal to $\vi$, and a projection cannot increase the length of a vector. 
\end{enumerate}
Thus, in both cases one of $\zi_1$ and $\zi_2$ satisfies the requirements,
and so we get the result.
\qed\end{proof}

\section{Analysis of runtime}\label{polyp}
In this section, we show that Algorithm \ref{cvpsolver} runs in
polynomial time. 
Observe first that in each recursive call the number
of basis vector reduces by 1.
Since all steps are obviously polynomial, it is
enough to show that all the vectors generated
during the run of the algorithm can be represented in polynomially
many bits in the input size of the top level of the
algorithm. 
For this, we can assume that the original
basis vectors $\B=\{\bi_1, \dots, \bi_n\}$ are integer vectors. This can be achieved by multiplying them with the product of their denominators. This operation does not increase the bit representation by more than a factor of $\log (mn)$. Assuming that the basis vectors are over integers,
a lower bound on the input size can be given by $M=\max\{n,\log(\max_i
\ltn{b_i})\}$.

Given a basis $\B=\{\bi_1, \dots, \bi_n\}$, the Gram-Schmidt orthogonalization
of $\B$ is $\{\tilde{\bi}_1, \dots, \tilde{\bi}_n\}$, 
where $\tilde{\bi}_i=\bi_i-\sum_{j=1}^{i-1}\bi_i|_{\tilde{\bi}_j}$. 
We need the following Lemma from~\cite{GM02}. 

\begin{lemma}\label{lem:basisSize}{\em \cite{GM02}}
Let $\B = \{\bi_1,\ldots,\bi_n\}$ be $n$ linearly independent vectors.
Define the vectors $\tilde{\bi}_i = \bi_{i} - \sum_{j=1}^{i-1} \bi_i|_{\tilde{\bi_j}}$.
Then, the representation of any vector $\tilde{\bi}_i$ 
as a vector of quotients of natural numbers 
takes at most $\poly(M)$ bits for $M = \max\{n,\log(\max_{i}\|\bi_i\|)\}$.
\end{lemma}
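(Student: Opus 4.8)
The plan is to control the bit-size of every vector that the algorithm ever writes down by relating it, through the Gram-Schmidt machinery, to the bit-size of the top-level integer basis. First I would recall the standard fact (see~\cite{GM02}) that if $\B=\{\bi_1,\dots,\bi_n\}$ are integer vectors, then each Gram-Schmidt vector $\tilde{\bi}_i$ can be written as $\tilde{\bi}_i = \bi_i - \sum_{j<i}\mu_{ij}\bi_j$ where the coefficients $\mu_{ij}$ are rationals whose numerator and denominator are subdeterminants of the Gram matrix $\B^T\B$. Since the entries of $\B^T\B$ are bounded by $m\cdot(\max_i\ltn{\bi_i})^2$, Hadamard's inequality bounds any $k\times k$ subdeterminant by $(m(\max_i\ltn{\bi_i})^2)^{k/2}k^{k/2}$, which has $\poly(M)$ bits for $M=\max\{n,\log(\max_i\ltn{\bi_i})\}$. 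Writing $\tilde{\bi}_i$ over the common denominator $\det(\B_{i-1}^T\B_{i-1})$ (the Gram determinant of the first $i-1$ vectors) then shows each coordinate of $\tilde{\bi}_i$ is a quotient of integers with $\poly(M)$ bits, which is exactly the statement of the lemma.

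Next I would propagate this bound through the recursion in Algorithm~\ref{cvpsolver}. The key observation is that the projected basis vectors $(\bi'_i)_{\perp\vi}$ are Gram-Schmidt-like objects: after $\fb$ produces a basis starting with $\vi$, projecting the remaining vectors orthogonally to $\vi$ is one Gram-Schmidt step, so by (the reasoning behind) Lemma~\ref{lem:basisSize} the projected vectors have coordinates that are quotients of integers with $\poly(M')$ bits, where $M'$ measures the size of the basis $\fb(\vi,\B)$ returns. The danger is that $M'$ could itself be much larger than $M$, or could blow up compounding over the $n$ levels of recursion. This is precisely why line~7 applies $\LLL$: after $\LLL$-reduction the basis vectors satisfy $\ltn{\bi_i}\le 2^{n}\lambda_n(\B_{\perp\vi})\le 2^{n}\lambda_n(\B)$ (using Lemma~\ref{plb} to bound the successive minima of the projection by those of $\B$), and $\lambda_n(\B)$ itself is at most, say, $\sqrt{n}\cdot\max_i\ltn{\bi_i}$ of the \emph{original} integer basis. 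Hence $\log\ltn{\bi_i}$ stays $O(n+\log\max_i\ltn{\bi_i})=O(M)$ at every level, so $M'=O(M)$ uniformly and there is no compounding: the bit-size bound $\poly(M)$ holds at all $n$ recursive levels simultaneously.

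Finally I would assemble the runtime argument. Each recursive call reduces the dimension by one, so there are at most $n$ levels; at each level the vectors $\zi_1$ (from the $\bdd$ call, which is polynomial by Theorem~\ref{lm} applied to a basis of $\poly(M)$ bit-size), $\vi=\Or(\B)$, the $\LLL$-reduced $\fb$-output, the projected basis $\B'_{\perp\vi}$, the projected target $\ti'_{\perp\vi}$, and the recombination coefficients $(a_1,\dots,a_n)$ all have representations of $\poly(M)$ bits, and each is computed by a polynomial-time step. Multiplying by the $n$ levels keeps everything polynomial in $M$, which by construction is polynomial in the original input size. I expect the main obstacle to be the second paragraph: making rigorous that the projected basis inherits a $\poly(M)$ bit bound and, in particular, that the $\LLL$ step genuinely prevents the quantity $M'$ from growing from one level to the next — i.e.\ carefully chaining ``$\LLL$-reduced $\Rightarrow$ short basis vectors,'' ``Lemma~\ref{plb} $\Rightarrow$ short successive minima survive projection,'' and ``Lemma~\ref{lem:basisSize} $\Rightarrow$ small bit-size of the next projected basis'' into a clean invariant that holds down the whole recursion.
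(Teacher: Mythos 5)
The lemma you are asked to prove is one the paper does not prove at all: it is stated and cited verbatim from the Goldwasser--Micciancio book~\cite{GM02}, and the paper relies on it as a black box. Your first paragraph does give a correct, self-contained proof, and it is the standard one: writing the Gram--Schmidt vectors over the common denominator $D_{i-1}=\det(\B_{i-1}^T\B_{i-1})$, noting that numerators and denominators are (sub)determinants of the Gram matrix, and bounding these by Hadamard. (Minor slip: Hadamard gives $|\det|\le A^k k^{k/2}$ for entries bounded by $A$, not $A^{k/2}k^{k/2}$; the conclusion $\poly(M)$ bits is unaffected.) So as far as the stated lemma goes, you have done what the paper declined to do, and your argument matches the source.

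Your second and third paragraphs, however, are no longer about Lemma~\ref{lem:basisSize}: they are an attempt at the paper's Lemma~\ref{lem:polybits2} (and implicitly Lemma~\ref{lem:vAsProjections}), and there your route diverges from the paper's. You control the representation level by level: apply the GSO bit-size bound to the $\fb$-output at each level and use $\LLL$ plus Lemma~\ref{plb} to argue that the local size parameter $M'$ does not grow. The paper instead proves (Lemma~\ref{lem:vAsProjections}) that the vectors $\vi_1,\dots,\vi_n$ found across \emph{all} recursion levels are exactly the Gram--Schmidt orthogonalization of a single lifted basis $\xl_1,\dots,\xl_n$ of the original integer lattice $\La(\B)$, whose norms it bounds by $O(\sqrt{n}\gamma\lambda_n(\B))$; it then invokes Lemma~\ref{lem:basisSize} once, on that integer basis. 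This one-shot reduction to a single GSO is cleaner and avoids the subtle point your level-by-level plan glosses over: Lemma~\ref{lem:basisSize} bounds bit-size in terms of norms only when the input vectors are integral, whereas the projected bases $\B_i$ at deeper levels are rational, and an $\LLL$ step bounds their norms but does nothing to their denominators. To make your invariant rigorous you would additionally have to track the lattice denominator of $\La(\B_i)$ across projections, which is precisely what the paper's lifting argument avoids. So: correct proof of the lemma itself, but a genuinely different (and somewhat gappier) approach to the polynomial-size claim that the lemma supports.
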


\begin{lemma}\label{lem:vAsProjections}
Let $\vi_i$, $i \in [n]$, be the vector $\vi$ generated in the $i$th
level of the recursion in line~6 of Algorithm~\ref{cvpsolver}.

There is a basis $\xl_1,\ldots,\xl_n$ of $\B$
such that the vectors $\vi_i$ are given by the Gram-Schmidt orthogonalization
of $\xl_1,\ldots,\xl_{n}$. Furthermore, $\xl_1, \dots, \xl_n$ as well as $\vi_1, \dots, \vi_n$ 
are polynomially representable in $M$.
\end{lemma}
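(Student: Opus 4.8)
The plan is to track how the vector $\vi_i$ produced at recursion level $i$ relates, when lifted back through all the earlier projections, to a fixed vector $\xl_i$ in the original lattice $\La(\B)$, and to show that these lifted vectors form a basis whose Gram--Schmidt orthogonalization is exactly $\vi_1,\dots,\vi_n$. First I would set up notation: write $\B^{(1)} = \B$ and for $i \geq 1$ let $\B^{(i+1)} = \B'_{\perp \vi_i}$ be the projected basis passed to the $(i+1)$st recursive call, so $\vi_i = \Or(\B^{(i)})$ is an elementary vector of the lattice $\La(\B^{(i)})$, which is itself a projection of $\La(\B)$ orthogonal to the span of $\vi_1,\dots,\vi_{i-1}$. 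The key structural observation is that $\vi_i \in \La(\B^{(i)}) = \La(\B)_{\perp \{\vi_1,\dots,\vi_{i-1}\}}$ means there is a lattice vector $\xl_i \in \La(\B)$ with $(\xl_i)_{\perp \{\vi_1,\dots,\vi_{i-1}\}} = \vi_i$; equivalently $\xl_i = \vi_i + (\text{a vector in } \Sp(\vi_1,\dots,\vi_{i-1}))$. Because $\vi_i$ lies in the orthogonal complement of $\vi_1,\dots,\vi_{i-1}$ and is nonzero, the vectors $\vi_1,\dots,\vi_n$ are pairwise orthogonal, hence linearly independent, hence so are $\xl_1,\dots,\xl_n$; and by construction $\vi_i$ is precisely the component of $\xl_i$ orthogonal to $\Sp(\xl_1,\dots,\xl_{i-1}) = \Sp(\vi_1,\dots,\vi_{i-1})$, which is the definition of the $i$th Gram--Schmidt vector of $\xl_1,\dots,\xl_n$.

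Next I need to argue $\xl_1,\dots,\xl_n$ is actually a \emph{basis} of $\La(\B)$, not merely an independent set. This follows from the invariant maintained by \fb{} and the projection step: in each recursive call, \fb{}$(\vi_i,\B^{(i)})$ returns $\tilde{\bi}_2,\ldots,\tilde{\bi}_n$ with $\La(\vi_i,\tilde{\bi}_2,\dots,\tilde{\bi}_n) = \La(\B^{(i)})$ (Lemma~\ref{newbasis}), and then $\B^{(i+1)}$ is obtained by projecting $\tilde{\bi}_2,\dots,\tilde{\bi}_n$ orthogonally to $\vi_i$, so $\La(\B^{(i+1)}) = \La(\B^{(i)})_{\perp \vi_i}$. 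Unwinding, $\La(\B^{(i)}) = \La(\B)_{\perp\{\vi_1,\dots,\vi_{i-1}\}}$ and $\{\vi_1,\dots,\vi_{i-1}\}$ together with a basis of $\La(\B^{(i)})$ generates $\La(\B)$. Choosing the $\xl_i$ as the canonical lifts (e.g.\ lift $\vi_i$ to $\La(\B^{(i-1)})$ using the \fb{}-basis relation, then to $\La(\B^{(i-2)})$, and so on) makes $\{\xl_1,\dots,\xl_i\} \cup \{\text{basis of }\La(\B^{(i+1)})\}$ a basis of $\La(\B)$ at every stage, so $\{\xl_1,\dots,\xl_n\}$ is a basis of $\La(\B)$.

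Finally, for the polynomial-representability claim I would invoke Lemma~\ref{lem:basisSize}: once we know $\vi_1,\dots,\vi_n$ is the Gram--Schmidt orthogonalization of the lattice basis $\xl_1,\dots,\xl_n$, it suffices to bound the bit size of $\xl_1,\dots,\xl_n$ by $\poly(M)$, and then Lemma~\ref{lem:basisSize} gives the bound on $\vi_1,\dots,\vi_n$ automatically. To bound $\|\xl_i\|$ I would use that $\LLL$ is applied to $\fb(\vi_i,\B^{(i)})$ in line~7, so the basis $\B^{(i+1)}$ consists of LLL-reduced (projected) vectors, whose lengths are controlled by $\lambda_n$ of the relevant lattice, which by Lemma~\ref{plb} is bounded in terms of the successive minima of the original $\La(\B)$, which are in turn bounded in terms of $M$ (e.g.\ via a Hadamard-type bound on $\det \B$); the lift from $\vi_i$ back to $\xl_i$ adds a vector in $\Sp(\vi_1,\dots,\vi_{i-1})$ whose coefficients are bounded because the Gram--Schmidt coefficients of an LLL-reduced basis are at most $1/2$ in absolute value and the $\vi_j$ are already controlled. \textbf{The main obstacle} I anticipate is precisely this last size bound: one must carefully chain together the LLL size guarantees at each of the $n$ levels with the lifting steps, making sure the bit-length does not blow up multiplicatively across recursion levels — the cleanest route is to show directly that every $\xl_i$ is a short lattice vector of $\La(\B)$ (length $\poly$ in the original data), rather than tracking errors level by level, and then let Lemma~\ref{lem:basisSize} do the remaining work; getting the LLL-reduction-plus-projection bookkeeping exactly right, so that the chosen lifts $\xl_i$ genuinely have small norm, is the part that requires care.
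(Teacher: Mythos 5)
Your overall plan matches the paper's proof: lift each $\vi_i$ to a lattice vector $\xl_i\in\La(\B)$, observe that the $\vi_i$ are pairwise orthogonal and are exactly the Gram--Schmidt vectors of $\xl_1,\ldots,\xl_n$, argue that the $\xl_i$ form a basis via the \fb/projection invariant, and invoke Lemma~\ref{lem:basisSize}. You have also correctly identified the crux: getting a $\poly(M)$ bound on $\|\xl_i\|$ so that Lemma~\ref{lem:basisSize} applies.

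Where you go off track is in \emph{how} you propose to bound the lift. You appeal to ``the Gram--Schmidt coefficients of an LLL-reduced basis are at most $1/2$,'' but that property concerns the vectors $\bi_i$ of the LLL-reduced basis $\B^{(i+1)}$ with respect to their own Gram--Schmidt vectors; it says nothing about the coefficients appearing in a lift of $\vi_i$ back to $\La(\B)$, since $\xl_i$ is not one of those basis vectors. The $\LLL$ call in line~7 is there for a different reason entirely (it keeps the representation of the projected bases $\B^{(i)}$ polynomially bounded, which is what Lemma~\ref{lem:polybits2} uses). The paper's argument for bounding $\|\xl_i\|$ is simpler and does not involve LLL at all: when lifting a vector from $\La(\B^{(j)})$ to $\La(\B^{(j-1)})$, one may always subtract an integer multiple of $\vi_{j-1}$ to land on the preimage whose component along $\vi_{j-1}$ has coefficient in $[-\tfrac12,\tfrac12]$. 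Iterating gives $\xl_i=\vi_i+\sum_{j<i}\delta_j\vi_j$ with $|\delta_j|\le\tfrac12$; since the $\vi_j$ are orthogonal, $\|\xl_i\|^2\le\sum_{j\le i}\|\vi_j\|^2$, and by Lemma~\ref{plb} together with the oracle guarantee $\|\vi_j\|\le\gamma\lambda_1(\B^{(j)})\le\gamma\lambda_n(\B)$, so $\|\xl_i\|^2\le n\gamma^2\lambda_n^2(\B)$. That is the clean, direct norm bound you said you wanted but did not quite produce; without it the ``bookkeeping across levels'' you worry about would indeed be a problem.
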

\begin{proof}
We first find lattice vectors $\xl_1, \dots, \xl_n \in \La(\B)$ which satisfy
\begin{align*}
\xl_i &= \vi_i + \sum_{j=1}^{i-1}\delta_j\vi_j
\end{align*}
for some $\delta_j \in [-\frac12, \frac{1}{2}]$, and then
 show that these vectors satisfy the claim of the lemma.

To see that such vectors exist, let $\B_j$ be the
 basis in the $j$th level of the recursion 
of Algorithm~\ref{cvpsolver}.  Then, we note that given a vector in $\La(\B_j)$ one can find a lattice vector in $\La(\B_{j-1})$ at distance at most $\frac{\ltn{\vi_{j-1}}}{2}$ in the direction of $\vi_{j-1}$ or $-\vi_{j-1}$.  We let $\xl_i$ be the vector obtained by doing such a lifting step repeatedly until we 
have a lattice vector in $\La(\B)$.

 The vectors $\vi_1, \dots, \vi_n$ are exactly the Gram-Schmidt orthogonalization of $\xl_1, \dots, \xl_n$,  because
\begin{align*}
\vi_i &= \xl_i - \xl_i|_{\vi_1} - \xl_i|_{\vi_2} - \dots - \xl_i|_{\vi_{i-1}}\;,
\end{align*}
and so the vectors $\xl_i$ must also form a basis of $\La(\B)$.

Also, we have for all $i \in [n]$:
\begin{align*}
\ltsqn{\xl_i} &\leq \ltsqn{\vi_i} + \frac{\ltsqn{\vi_{i-1}}}{4} + \dots + \frac{\ltsqn{\vi_1}}{4} \\
&\leq \sum_{j=1}^{i} \ltsqn{\vi_j}\\
&\leq n\gamma^2\lambda_n^2(\B)\tag{From Lemma \ref{plb}}
\end{align*}
As $\xl_1, \dots, \xl_n$ are vectors in the integer lattice $\B$; $\xl_1, \dots, \xl_n$ are polynomially representable in $M$ (and $\log \gamma$, but we can assume $\gamma < 2^{n}$). Coupled with Lemma~\ref{lem:basisSize} this completes the proof.
\qed
\end{proof}

\begin{lemma}\label{lem:polybits2}
All vectors which are generated in a run of Algorithm~\ref{cvpsolver}
have a representation of size $\poly(M)$ for $M=\max\{n,\log(\max_i
\ltn{b_i})\}$, in case the entries are represented as quotients of natural
numbers.
\end{lemma}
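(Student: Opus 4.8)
The plan is to track every vector the algorithm creates and show each one is a polynomially-bounded combination of vectors we have already controlled via Lemma~\ref{lem:vAsProjections} and Lemma~\ref{lem:basisSize}. The vectors generated in a single invocation of Algorithm~\ref{cvpsolver} at recursion level $j$ are: the output $\zi_1$ of the $\frac{1}{2\gamma}$-$\bdd$ call in line~5, the oracle output $\vi = \vi_j$ in line~6, the LLL-reduced basis $\{\bi_2,\dots,\bi_n\}$ in line~7, the projected basis vectors $(\bi'_i)_{\perp\vi}$ in lines~8--9, the projected target $\ti'_\perp$ in line~10, the recursively returned $\zi'_2$ in line~11, and finally $\zi_2$ in line~13. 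Since there are only $n$ levels and each generates $O(n)$ vectors, it suffices to bound each family.

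**First I would** handle the $\vi_j$'s: Lemma~\ref{lem:vAsProjections} already gives that each $\vi_j$ (as the $j$th Gram--Schmidt vector of the basis $\xl_1,\dots,\xl_n$ of $\La(\B)$) is representable in $\poly(M)$ bits. This immediately controls line~6 at every level. For the basis $\B_j$ at level $j$: I claim $\B_j$ is, up to the $\fb$/LLL repair in line~7, the projection $\La(\B_{\perp\vi_1,\dots,\vi_{j-1}})$, so a natural basis for it is $\{(\xl_i)_{\perp\vi_1,\dots,\vi_{j-1}} : i \ge j\}$ --- each such vector is $\xl_i$ minus a $\poly(M)$-bit Gram--Schmidt tail (by Lemma~\ref{lem:basisSize} applied to $\xl_1,\dots,\xl_n$), hence $\poly(M)$-bit. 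However, the algorithm does not pass this particular basis down; it passes $\LLL(\fb(\vi_j,\B_j))$. Here I invoke the standard fact that LLL, run on an integer-entry basis (or more generally a basis with $\poly(M)$-bit entries --- we can clear denominators, as the paper does in Section~\ref{polyp}), outputs a basis whose entries are bounded polynomially in the input bit-size; since $\det$ and the lengths $\lambda_n$ are controlled, the output stays $\poly(M)$. So the bases fed into the recursion are $\poly(M)$-representable, and consequently the projected vectors in lines~8--10 and the target projections $\ti'_\perp$ are too, being short linear combinations with $\poly(M)$-bit coefficients (the $\vi|_\ui$ operation involves only an inner-product ratio).

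**Next**, for $\zi_1$ in line~5: by Theorem~\ref{lm}, $\zi_1$ is produced by a polynomial-time reduction that makes calls to $\Or$ on bases derived in polynomial time from $\B_j$; since polynomial-time algorithms cannot produce super-polynomially large outputs and every oracle answer is an elementary vector of a lattice whose basis has $\poly(M)$ bits (so its length, hence representation, is $\poly(M)$ --- an elementary lattice vector's coordinates in the $\poly(M)$-bounded basis are $\poly(M)$-bounded integers), $\zi_1$ is $\poly(M)$-representable. Finally $\zi_2$ in line~13: it equals $a_1\vi_j + \zi'_2 + (\text{a }\poly(M)\text{ correction})$ where $\zi'_2$ is the recursive return value (inductively $\poly(M)$) and $a_1$ is the rounding of a ratio of inner products of $\poly(M)$-bit vectors, hence a $\poly(M)$-bit integer. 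Wrapping up: I would organize the whole argument as an induction on $n$ (equivalently on recursion depth), the hypothesis being ``all vectors in a run on a $\poly(M)$-bit basis are $\poly(M)$-bit,'' with the base case $n=1$ trivial (line~3 rounds a ratio).

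**The main obstacle** I anticipate is the LLL step in line~7: I am relying on LLL-on-a-$\poly(M)$-bit-input producing a $\poly(M)$-bit output, which is classical but needs the input to $\fb$ to itself be $\poly(M)$-bit and needs a clean statement that LLL's bit-complexity (hence output size) is polynomial in the input bit-size --- this is where I would cite \cite{LLL82} or \cite{GM02} rather than reprove it. A secondary subtlety is making precise that the algorithm's recursively-passed basis $\B_j$, though not literally $\{(\xl_i)_{\perp\vi_1,\dots,\vi_{j-1}}\}$, generates the same projected lattice, so that the ``ideal'' $\poly(M)$-bit basis from Lemma~\ref{lem:vAsProjections} and Lemma~\ref{lem:basisSize} is an available representation and LLL's input-size bound kicks in; but once we clear denominators at each level (costing only a $\log(mn)$ factor per level, and at most $n$ levels, still polynomial overall) this is routine. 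Everything else is bookkeeping: each operation (projection, $\fb$, the two optimization roundings) is a polynomial-time map and polynomial-time maps preserve ``$\poly(M)$-bit.''
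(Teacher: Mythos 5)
Your proof is correct and follows essentially the same route as the paper's: control the oracle outputs $\vi_j$ uniformly via Lemma~\ref{lem:vAsProjections}, use the LLL step in line~7 to keep the recursively passed bases small, and observe that the remaining vectors ($\zi_1$, the projections, $\zi_2$) arise from polynomial-time subroutines or short combinations of already-controlled vectors. Your write-up is more explicit about the per-line bookkeeping and the potential level-by-level blow-up than the paper's terse three-sentence argument, but the key lemmas invoked and the overall structure are the same.
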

\begin{proof}
 The vectors $\vi_i$ which are generated in line~6 at different levels of 
recursion also have
representation of size $\poly(M)$ by Lemma~\ref{lem:basisSize}.
The basis $\B_i$ is $\LLL$ reduced and hence it is representable
in number of bits which is a fixed polynomial in the shortest
vector \cite{LLL82} and hence also $\vi_i$. 

The remaining vectors are produced by oracles which run in 
polynomial time or are small linear combinations of other vectors.
\qed
\end{proof}
We now give a proof of Theorem \ref{main}.
\begin{proof}(Theorem \ref{main}) Given $\B \in \Q^{m \times n}$ and $\ti \in \R^m$ we run Algorithm \ref{cvpsolver}. From Lemma \ref{algoworks}, the algorithm returns a vector $\zi$ which is a $\gamma^2\sqrt{n}$-approximation to the closest vector. Also, from Lemma~\ref{lem:polybits2}, all vectors in the algorithm have polynomial size representation, and so the algorithm runs in time ${\poly}(\log\gamma,M)$.
\qed\end{proof}

\section{Acknowledgements}

We thank Divesh Aggarwal and Robin K\"{unzler} for numerous helpful 
discussions, and Daniele Micciancio for useful comments on an earlier
version of this paper.
We want to thank the anonymous referees for helpful comments.  In particular,
we thank the referee who pointed out the relevance of Theorem~\ref{lm} to
our work and helped us simplify the proof to its current form.

\appendix
\section{Solving $\bdd$ using a $\usvp$-oracle}
\label{app:lyubashevskyMicciancio}
In this appendix we sketch the reduction from $\bdd_{1/2\gamma}$ to
$\usvp_{\gamma}$ from \cite{LM09} for completeness.
We will assume that $\di(\ti,\La(\B))$ is known -- it is shown in \cite{LM09}
how to avoid this assumption.
\begin{proof}(Theorem \ref{lm})
Let $(\B,\ti)$ be an instance of $\bdd_{\frac{1}{2\gamma}}$ and let $\alpha=\di(\ti,\La(\B)) \leq \frac{\lambda_1(\B)}{2\gamma}$. For simplicity we assume that we know $\alpha$ (see \cite{LM09} for bypassing this). Our goal is to find a vector $\ti^{\dagger} \in \La(\B)$ such that $\di(\ti^{\dagger},\ti) = \alpha$.
We define the new basis
\begin{equation}\label{deftildeb}
\tilde{\B} = \left(\begin{array}{cc}{\bf \B} & {\bf \ti}\\ {\bf 0} & \alpha\end{array}\right)\;.
\end{equation}
We will show that in $\tilde{\B}$ the vector $\vi := \big[\begin{array}{c}\ti^\dagger-\ti\\-\alpha\end{array}\big]$ is a  $\gamma$-unique shortest vector. 
It is clear that we can recover $\ti^{\dagger}$, the solution to the $\bdd$ problem, when given~$\vi$. The length of $\vi$ is $\sqrt{2}\alpha$, and so it is enough to show that all other vectors in $\La(\tilde{\B})$, which are not a multiple of $\vi$ have length at least $\sqrt{2}\gamma\alpha$.  Let us (for the sake of contradiction) assume that there is a vector $\vi_2$ of length at most $\|\vi_2\|<\sqrt{2}\gamma\alpha$ which is not a multiple of the vector $\vi$ above. We can write $\vi_2$ as $\vi_2 = \big[\begin{array}{c}\ui-a\ti\\-a\alpha\end{array}\big]$, where $\ui \in \La(\B)$ and $a \in \Z$. Since $\vi_2$ is not a multiple of $\vi$, it must be that $\ui - a\ti^\dagger \in \La(\B)$ is a non-zero lattice vector.  Now, using the triangle inequality, we get
\begin{align*}
\ltn{\ui-a\ti^\dagger} &\leq \ltn{\ui-a\ti} + a\ltn{\ti-\ti^\dagger} \\
&= \sqrt{\|\vi_2\|^2 - a^2\alpha^2} + a \alpha\\
&< \sqrt{2\alpha^2\gamma^2-a^2\alpha^2}+ a\alpha\\
&\leq 2\alpha\gamma\tag{Maximized when $a = \gamma$} \leq \lambda_1(\B)\;,
\end{align*}
which is a contradiction.
\qed
\end{proof}

\begin{thebibliography}{1}
\bibitem{Ajt96} M. Ajtai. Generating hard instances of lattice problems, {\em STOC}, 1996, 99–108.
\bibitem{Ajt98} M. Ajtai. The shortest vector problem in $\ell_2$ is {\bf NP}-hard for randomized reductions, {\em STOC}, 1998, 10-19.
\bibitem{AKS02} Mikl{\'o}s Ajtai, Ravi Kumar and D. Sivakumar. Sampling Short Lattice Vectors and the Closest Lattice Vector Problem, {\em CCC}, 2002, pp. 53-57.
\bibitem{AKS01} M. Ajtai, R. Kumar and D. Sivakumar. A sieve algorithm for the shortest lattice vector problem, {\em STOC}, 1998, 266-275.
\bibitem{BS99} J. Bl\"{o}mer and J.-P. Seifert. The complexity of computing short linearly independent vectors and short bases in a lattice, {\em STOC}, 1999, 711-720.
\bibitem{Ban93} W. Banaszczyk. New bounds in some transference theorems in the geometry of numbers. {\em Mathematische Annalen}, 296:625-635, 1993.
\bibitem{DKRS03} I. Dinur, G. Kindler, R. Raz, and S. Safra. Approximating CVP to within almost-polynomial factors is {\bf NP}-hard. {\em Combinatorica}, 23(2):205–243, 2003.
\bibitem{GMSS99} O. Goldreich, D. Micciancio, S. Safra, and J.-P. Seifert. Approximating shortest lattice vectors is not harder than approximating closest lattice vectors. {\em Information Processing Letters}, 71(2):55–61, 1999. 
\bibitem{GM02}  S. Goldwasser and D. Micciancio.  Complexity of lattice problems, {\em Springer}, 2002.
\bibitem{HR07} I. Haviv and O. Regev. Tensor-based hardness of the shortest vector problem to within almost polynomial factors, {\em STOC}, 2007. 
\bibitem{Kannan87} Ravi Kannan. Minkowski's convex body theorem and integer programming, {\em Math. Oper. Res.}, 12 (1987), pp. 415–440. 
\bibitem{Kan87} Ravi Kannan. Algorithmic geometry of numbers, Annual Review of Computer Science 2 (1987), 231-267.
\bibitem{Kho05} S. Khot. Hardness of approximating the shortest vector problem in lattices, {\em JACM}, 2005, 52(5), 789-808.
\bibitem{LLL82} A. K. Lenstra, H. W. Lenstra Jr., and L. Lovasz. Factoring polynomials with rational coeﬃcients, {\em Mathematische Annalen}, 261(1982), 513–534. 
\bibitem{LM09} Vadim Lyubashevsky, Daniele Micciancio. On Bounded Distance Decoding, Unique Shortest Vectors, and the Minimum Distance Problem, {\em CRYPTO} 2009, 577-594
\bibitem{Mic08} D. Micciancio. Efficient reductions among lattice problems, {\em SODA}, 2008, 84–93.
\bibitem{Mic01} D. Micciancio. The shortest vector problem is {\bf NP}-hard to approximate within some constant, {\em SIAM journal on Computing}, 2001, 30(6), 2008-2035.
\bibitem{MV10} D. Micciancio and P. Voulgaris. A deterministic single exponential time algorithm for most lattice problems based on voronoi cell computations, {\em STOC}, 2010, pp. 351-358.
\bibitem{Pei09} C. Peikert. Public-key cryptosystems from the worst-case shortest vector problem, {\em STOC}, 2009. 
\bibitem{Reg04} O. Regev. New lattice-based cryptographic constructions, {\em J. ACM} 51 (2004), no. 6, 899–942.
\bibitem{Sch87} C.-P. Schnorr. A hierarchy of polynomial time lattice basis reduction algorithms, {\em Theoretical Computer Science}, 53(2-3):201–224, 1987. 
\end{thebibliography}
\end{document}